\documentclass[10pt]{birkjour}
 \newtheorem{thm}{Theorem}
 \newtheorem{cor}[thm]{Corollary}

 \theoremstyle{definition}
 
 \theoremstyle{remark}

 \numberwithin{equation}{section}
 \usepackage{amsmath}
 \usepackage{amssymb,textcomp}
 \setlength{\parskip}{1.3ex plus0.3ex minus0.3ex}
 \setlength{\parindent}{0em}

\author{Helmut Kahl}
\address{
Munich University of Applied Sciences\\
Lothstr. 34\\
D-80335 M\"unchen\\
Germany}
\email{kahl@hm.edu}
%------------------------------------------------------------------------------
\title{The loss value of multilinear regression}
%------------------------------------------------------------------------------

\begin{document}
\maketitle

{\bf Abstract.}	Determinant formulas are presented for:
	a certain positive semidefinite, hermitian matrix;
	the loss value of multilinear regression;
	the multiple linear regression coefficient.

%----------classification, keywords, date
{\bf Mathematical Subject Cassification (2010).} 15A03, 15A15, 62H12.

{\bf Keywords.} determinant; positive semidefinite hermitian matrix; loss value of multilinear regression; multiple regression coefficient
%-----------------------------------------------------

\section{Euclidean distance by help of determinants}

\paragraph{Introduction} We consider the euclidean norm $x \mapsto \|x\| := \sqrt{x^* x}$ of column vectors $x \in \mathbb{C}^m$. For a complex $m \times n$-matrix $A \in \mathbb{C}^{m \times n}$ and a vector $b \in \mathbb{C}^m$ we denote by $(A|b)$ the $m \times (n+1)$-matrix $A$ together with $b$ as the last column. The following theorem is well-known in the real case and can be shown by help of a formula for the volume of an $n$-dimensional parallelepiped embedded in $\mathbb{R}^m$. Here we offer a less known proof via QR-decomposition of a complex matrix.

\begin{thm}\label{thm}
	For the euclidean distance $\textnormal{dist}(A, b)$ between $b \in \mathbb{C}^m$ and the column space of $A \in \mathbb{C}^{m \times n}$ it holds:
	\begin{equation*}
		\textnormal{dist}(A, b) \sqrt{\det(A^* A)} = \sqrt{\det((A|b)^* (A|b))}
	\end{equation*}
\end{thm}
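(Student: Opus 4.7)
The plan is to exploit a QR-decomposition of the augmented matrix $(A|b)$ and read off both determinants from the upper triangular factor. I first handle the generic case, then show the degenerate cases collapse both sides to zero.

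Assume first that $A$ has full column rank $n$. Then there is a decomposition $A = QR$ with $Q \in \mathbb{C}^{m \times n}$ having orthonormal columns ($Q^*Q = I_n$) and $R \in \mathbb{C}^{n \times n}$ upper triangular and invertible. Write $b = Qc + r$ with $c := Q^*b$, so that $r \perp \operatorname{col}(Q) = \operatorname{col}(A)$ and $\|r\| = \operatorname{dist}(A,b)$. If $r \ne 0$, set $q := r/\|r\|$; then $q$ is orthogonal to every column of $Q$, and an easy block computation gives
\begin{equation*}
  (A|b) \;=\; (Q\,|\,q)\begin{pmatrix} R & c \\ 0 & \|r\| \end{pmatrix}
  \;=:\; \tilde Q\,\tilde R,
\end{equation*}
with $\tilde Q^*\tilde Q = I_{n+1}$ and $\tilde R$ upper triangular. (If $r=0$, the identity $\operatorname{dist}(A,b)=0$ together with the linear dependence of the columns of $(A|b)$ makes both sides vanish.)

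From $A^*A = R^*Q^*QR = R^*R$ we read off $\det(A^*A) = |\det R|^2$, and similarly $(A|b)^*(A|b) = \tilde R^*\tilde R$ yields
\begin{equation*}
  \det\bigl((A|b)^*(A|b)\bigr) \;=\; |\det\tilde R|^{2} \;=\; |\det R|^{2}\,\|r\|^{2} \;=\; \det(A^*A)\,\operatorname{dist}(A,b)^{2}.
\end{equation*}
Taking square roots gives the stated equality.

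It remains to cover the case that $A$ does not have full column rank. Then $\det(A^*A)=0$, so the left-hand side is $0$. But the columns of $A$ are linearly dependent, hence so are the first $n$ columns of $(A|b)$, making $(A|b)$ of rank at most $n$; therefore the $(n+1)\times(n+1)$ Gram matrix $(A|b)^*(A|b)$ is singular and the right-hand side also vanishes.

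The only genuinely delicate step is verifying that the augmented QR-factor $\tilde R$ is indeed upper triangular with last diagonal entry $\|r\|$; this hinges on the orthogonality $q \perp Q$, which is immediate from $r = b - QQ^*b$. Everything else is bookkeeping, and the complex setting causes no complication because the Gram determinants $\det(M^*M) = |\det R|^2$ are manifestly real and nonnegative.
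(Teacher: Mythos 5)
Your proof is correct and takes essentially the same route as the paper: a QR-type factorization in which both Gram determinants are read off the upper triangular factor and the distance appears as the last diagonal entry. The only differences are in execution --- you use a thin QR of $A$ and append the normalized residual (treating the rank-deficient and $r=0$ cases explicitly), whereas the paper triangularizes $(A|b)$ with a full unitary $Q$ and invokes unitary invariance of the distance.
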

\begin{proof}
	For a unitary $m \times m$-matrix $Q$ it holds $\textnormal{dist}(A, b)$ = $\textnormal{dist}(Q A, Q b)$ due to \cite{Horn/Johnson}, thm. 2.1.4(g). According to \cite{Horn/Johnson}, thm. 2.1.14(d) there is a unitary $Q$ s.t. $(A|b) = Q (R|c)$ with $(R|c)$ upper right triangular. Hence $\delta = \textnormal{dist}(R, c)$ is the absolute value of the $(n+1)$-th coordinate of $c$. So we have
	\begin{equation*}
		\det((A|b)^* (A|b)) = \det((R|c)^* (R|c)) = \delta^2 \det(R^* R) = \delta^2 \det(A^* A) .
	\end{equation*}
	Since $\det(A^* A) \ge 0$ for arbitrary complex matrices $A$, s. e.g. \cite{Horn/Johnson}, thm. 4.1.5 \& 7.2.7(a), the assertion follows.
\end{proof}

\paragraph{Matrix equation for the distance} In case $A$ has full column rank $\textnormal{rk}(A) = n$ we have $\det(A^* A) > 0$. Then the formula yields $\textnormal{dist}(A, b)$ as a quotient of the two square root values. And by plugging the minimum point $x = (A^* A)^{-1} A^* b$ into $\|A x - b\|^2$ we obtain
\begin{equation*}
	\det((A|b)^* (A|b)) / \det(A^* A) = \textnormal{dist}(A, b)^2 = b^* b - b^* A (A^* A)^{-1} A^* b .
\end{equation*}

\paragraph{Special determinant equation} Now, for a matrix $A \in \mathbb{C}^{(n+1) \times n}$ let $A_i$ denote the matrix $A$ without its $i$-th row. Via developing $\det(A|a_j) = 0$ by the last column for every column $a_j$ of $A$ we see that the vector
\begin{equation*}
	b := \left((-1)^n \overline{\det(A_i)}\right)_{i = 1 , ... ,n+1}	
\end{equation*}
is orthogonal to the column space of $A$. Applying Theorem \ref{thm} to $A$ and $b$ we obtain

\begin{cor}
	For $A \in \mathbb{C}^{(n+1) \times n}$ holds the identity of $n \times n$-determinants:
	\begin{equation*}
		\sum_{i=1}^{n+1} |\det(A_i)|^2 = \det(A^* A)
	\end{equation*}
\end{cor}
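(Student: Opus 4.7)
My plan is to apply Theorem~\ref{thm} directly to the vector $b$ constructed just above, exploiting the crucial fact that $(A|b)$ is a \emph{square} matrix.

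First I would observe that, because $b$ is orthogonal to every column of $A$, the nearest point of the column space of $A$ to $b$ is the origin, so $\textnormal{dist}(A,b) = \|b\|$. The definition of $b$ yields immediately
\[
\|b\|^2 = \sum_{i=1}^{n+1} |\det(A_i)|^2.
\]

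Next, since $(A|b)$ is an $(n+1)\times(n+1)$ matrix, one has $\det((A|b)^*(A|b)) = |\det(A|b)|^2$. To evaluate $\det(A|b)$, I would expand along the last column, giving
\[
\det(A|b) = \sum_{i=1}^{n+1} (-1)^{i+n+1} b_i \det(A_i).
\]
The sign convention in the definition of $b_i$ is precisely what is needed to cancel the $(-1)^{i+n+1}$ and reduce each summand to $\overline{\det(A_i)}\det(A_i) = |\det(A_i)|^2$, so $\det(A|b) = \sum_{i=1}^{n+1} |\det(A_i)|^2 = \|b\|^2$, a nonnegative real number. Plugging these two calculations into Theorem~\ref{thm} produces
\[
\|b\|^2 \det(A^*A) = |\det(A|b)|^2 = \|b\|^4,
\]
and cancelling $\|b\|^2$ yields the claimed identity $\det(A^*A) = \|b\|^2$. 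The case $\|b\| = 0$ requires only a short remark: the vanishing of every $n\times n$ minor of $A$ forces $\textnormal{rk}(A) < n$, whence $A^*A$ is singular and both sides of the identity equal zero.

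The main obstacle, such as it is, lies in the sign tracking inside the Laplace expansion. The superficially more natural route, which uses $A^*b = 0$ to put $(A|b)^*(A|b)$ in block-diagonal form with blocks $A^*A$ and $\|b\|^2$, collapses to the tautology $\det(A^*A)\|b\|^2 = \det(A^*A)\|b\|^2$ once Theorem~\ref{thm} is invoked. One therefore really has to exploit the square-matrix identity $\det(M^*M) = |\det M|^2$ together with the cofactor expansion of $\det(A|b)$ to extract nontrivial information.
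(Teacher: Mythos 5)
Your proof is correct and is essentially the paper's own argument: the paper simply says ``apply Theorem~\ref{thm} to $A$ and $b$'', and the details you supply --- $\textnormal{dist}(A,b)=\|b\|$ by orthogonality, $\det((A|b)^*(A|b))=|\det(A|b)|^2$ for the square matrix $(A|b)$, the Laplace expansion giving $|\det(A|b)|=\|b\|^2$, and the separate treatment of $\|b\|=0$ via vanishing maximal minors --- are exactly the intended ones, with the degenerate case being a welcome addition the paper glosses over. The only nitpick is the sign bookkeeping: with the orthogonality-producing choice $b_i=(-1)^i\,\overline{\det(A_i)}$ one gets $\det(A|b)=(-1)^{n+1}\|b\|^2$ rather than $\|b\|^2$ on the nose, which is immaterial since only $|\det(A|b)|^2$ enters your computation.
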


\section{Loss value and correlation}

\paragraph{Introduction} The task of \textit{multiple linear regression} is the computation of \textit{regression coefficients} $\alpha_0, \alpha_1 , ... , \alpha_n$ of the \textit{fitting hyperplane} (in $\mathbb{R}^{n+1}$)
\begin{equation*}
	y = \alpha_0 + \alpha_1 x_1 + ... + \alpha_n x_n	
\end{equation*}
as a function of variables $x_1 , ... , x_n \in \mathbb{R}$ from (\textit{empirical}) data points
\begin{equation*}
	(x_{1 1} , ... , x_{1 n} , y_1) , ... , (x_{m 1} , ... , x_{m n} , y_m) \in \mathbb{R}^{n+1} , m \in \mathbb{N}
\end{equation*}
s.t. the \textit{loss value}
\begin{equation*}
	\delta := \left(\sum_{i=1}^{m} (\alpha_0 + \alpha_1 x_{i 1} + ... + \alpha_n x_{i n} - y_i)^2\right)^{1/2}
\end{equation*}
is at minimum. For $a := (\alpha_0, \alpha_1 , ... , \alpha_n)^t$, $y := (y_1 , ... , y_m)^t$ and the matrix $(1|X)$ that we obtain from $X := (x_{i j})_{i \in \mathbb{N}_m, j \in \mathbb{N}_n}$ by prepending $(1 , ... , 1)^t \in \mathbb{R}^m$ as an extra column (of index $0$) we have $\delta = \|(1|X) a - y\|$. So the minimal value of $\delta$ is the euclidean distance between $y$ and the column space of $(1|X)$.

\paragraph{Centering} In statistics it is common to express empirical values of expectation with help of the arithmetic mean $\bar{y} := (y_1 + ... + y_m)/m$ of a (\textit{sample}) vector like $y$ above. A regression vector $a$ like described above is defined by the normal equation system
\begin{equation}\label{eq_normal}
	(1|X)^t (1|X) a = (1|X)^t y .	
\end{equation}
After division by $m$ the equation of row index $0$ of equation \ref{eq_normal} ends in
\begin{equation}\label{eq_mean}
	\bar y = \alpha_0 + \alpha_1 \bar{x}_1 + ... + \alpha_n \bar{x}_n	
\end{equation}
where $x_j$ denotes the $j$-th column of $X$. We denote by $\hat{y} := (y_1 - \bar{y}, ... , y_m - \bar{y})^t$ the \textit{centering of} $y$ and by $\hat{X}$ the $m \times n$-matrix obtained from $X$ by centering all its columns.\footnote{Then $(\hat{X}^t \hat{X})/(m-1)$ is the \textit{sample covariance matrix of the sample matrix} $X$. It serves as an estimator of the \textit{covariance matrix of the random vector} $(X_1 , ... , X_n)$ whose $m$ samples are given by $X$, row by row. With the additional random variable $Y$ whose samples are represented by $y$ the \textit{mean squared loss value} $\delta^2/(m-1)$ is an estimator of the expected value of the random variable $(Y - \alpha_0 - \alpha_1 X_1 - ... - \alpha_n X_n)^2$; s. e.g. \cite{Beichelt/Montgomery}, Kap. 3.8!} Then the normal equations of row indices $1$ to $n$ of equation \ref{eq_normal} are transformed to
\begin{equation}\label{eq_center}
	\hat{X}^t \hat{X} a_1 = \hat{X}^t \hat{y} , a_1 := (\alpha_1 , ... , \alpha_n)^t
\end{equation}
by subtracting the $\bar{x}_i$-th multiple of equation \ref{eq_mean} from the $i$-th normal equation for $i = 1 , ... , n$. This shows $\textnormal{rk}(1|X) = \textnormal{rk}(\hat{X}) + 1$.

\begin{thm}\label{thm_lossval}
	In case $\textnormal{rk}(1|X) = n+1$ the loss value of the sample matrix $(X|y)$ equals
	\begin{equation*}
		\sqrt{\det\left(\left(\hat{X}|\hat{y}\right)^t \left(\hat{X}|\hat{y}\right)\right) / \det\left((\hat{X})^t \hat{X}\right)}
	\end{equation*}
\end{thm}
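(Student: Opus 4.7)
The plan is to apply Theorem \ref{thm} to $A = (1|X)$ and $b = y$, and then transform the resulting quotient of Gram determinants so that the constant column and the sample means disappear.

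First I would note that the loss value is by construction $\textnormal{dist}((1|X), y)$. Since $\textnormal{rk}(1|X) = n+1$ we have $\det((1|X)^t(1|X)) > 0$, so Theorem \ref{thm} (together with the remarks following it) gives
\begin{equation*}
	\delta^2 \;=\; \frac{\det\bigl((1|X|y)^t (1|X|y)\bigr)}{\det\bigl((1|X)^t (1|X)\bigr)} .
\end{equation*}
The remaining task is to show that centering turns each Gram determinant on the right-hand side into its hatted counterpart multiplied by the common factor $m$, which then cancels.

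Next I would perform the centering by column operations: subtract $\bar{x}_j$ times the constant column from the $x_j$-column (for $j = 1, \dots, n$) and subtract $\bar y$ times the constant column from $y$. These operations are realised by right-multiplication with a unit upper triangular matrix $E$ (of determinant $1$) sending $(1|X|y)$ to $(1|\hat{X}|\hat{y})$; the analogous smaller matrix sends $(1|X)$ to $(1|\hat{X})$. Hence
\begin{equation*}
	\det\bigl((1|X|y)^t(1|X|y)\bigr) = \det\bigl((1|\hat{X}|\hat{y})^t(1|\hat{X}|\hat{y})\bigr), \quad \det\bigl((1|X)^t(1|X)\bigr) = \det\bigl((1|\hat{X})^t(1|\hat{X})\bigr).
\end{equation*}

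Now I would exploit the defining property of centering: each centered column sums to zero, so it is orthogonal to the constant column $(1,\dots,1)^t$. Consequently both $(1|\hat{X})^t(1|\hat{X})$ and $(1|\hat{X}|\hat{y})^t(1|\hat{X}|\hat{y})$ are block diagonal with leading $1\times 1$ block equal to $m$, so
\begin{equation*}
	\det\bigl((1|\hat{X})^t(1|\hat{X})\bigr) = m\,\det(\hat{X}^t\hat{X}), \qquad \det\bigl((1|\hat{X}|\hat{y})^t(1|\hat{X}|\hat{y})\bigr) = m\,\det\bigl((\hat{X}|\hat{y})^t(\hat{X}|\hat{y})\bigr) .
\end{equation*}
Substituting these two identities into the quotient for $\delta^2$ cancels the factor $m$ and yields the asserted formula. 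I expect the main obstacle to be purely bookkeeping: writing down the triangular matrix $E$ cleanly and invoking the relation $\textnormal{rk}(1|X) = \textnormal{rk}(\hat{X}) + 1$ already established in the excerpt to ensure $\det(\hat{X}^t\hat{X}) > 0$; no further analytic input is needed.
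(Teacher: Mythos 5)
Your proposal is correct, but it takes a genuinely different route from the paper. The paper never touches the uncentered Gram matrices: it uses the row-$0$ normal equation (equation \ref{eq_mean}) to eliminate $\alpha_0$, observes that the residual $y - \alpha_0 - \alpha_1 x_1 - \dots - \alpha_n x_n$ equals $\hat{y} - \alpha_1\hat{x}_1 - \dots - \alpha_n\hat{x}_n$, concludes that the minimal loss is exactly $\textnormal{dist}(\hat{X},\hat{y})$, and then applies Theorem \ref{thm} directly to the centered pair $(\hat{X},\hat{y})$. You instead apply Theorem \ref{thm} to the original pair $((1|X),y)$ and then push the centering through the determinants: unit upper triangular column operations leave both Gram determinants unchanged, and the orthogonality of the centered columns to $(1,\dots,1)^t$ makes the Gram matrices block diagonal with leading entry $m$, so the factor $m$ cancels in the quotient. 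Both arguments are sound and of comparable length; your version is pure determinant bookkeeping and does not need equation \ref{eq_mean} at all (only the rank relation $\textnormal{rk}(1|X)=\textnormal{rk}(\hat{X})+1$, and even that is re-derivable from your block-diagonal identity), while the paper's reduction has the side benefit of exhibiting the minimal loss as $\textnormal{dist}(\hat{X},\hat{y})$, i.e. $\hat{p}$ as the orthogonal projection of $\hat{y}$ onto the column space of $\hat{X}$, which is exactly the geometric fact reused afterwards for the correlation corollary.
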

\begin{proof}
	Expressing $\alpha_0$ in terms of the other regression coefficients by help of equation \ref{eq_mean} gives us $y - \alpha_0 - \alpha_1 x_1 - ... - \alpha_n x_n = \hat{y} - \alpha_1 \hat{x}_1 - ... - \alpha_n \hat{x}_n$. Hence the loss value is the euclidean distance between $\hat{y}$ and the column space of $\hat{X}$. Because $\hat{X}$ has full rank the formula follows by Theorem \ref{thm}.
\end{proof}

\paragraph{Correlation} For the orthogonal projection $p := (1|X) a$ of $y$ onto the column space of $(1|X)$ it holds $\hat{p} = \hat{X} a_1$. So by equation \ref{eq_center} $\hat{p}$ is the orthogonal projection of $\hat{y}$ onto the column space of $\hat{X}$. Hence in case $\hat{y}, \hat{p} \ne 0$ the angle between $\hat{y}$ and $\hat{p}$ is at most $\pi / 2$.\footnote{The condition $\hat{y} \ne 0$ means a non-zero \textit{sample variance} $(\hat{y}^t \hat{y})/(m-1)$ of $y$.} Therefore the \textit{multiple correlation coefficient}
\begin{equation*}
	\rho(X,y) := \hat{y}^t \hat{p} / \|\hat{p}\| / \|\hat{y}\|	
\end{equation*}
between $y$ and $X$ is non-negative. According to the Cauchy-Schwarz inequality it is at most $1$. The latter theorem allows the computation of $\rho(X,y)$ without the computation of $p$, i.e. without performing the linear regression.

\begin{cor}
	For a sample vector $y \in \mathbb{R}^m$ with $\hat{y} \ne 0$ and a sample matrix $X \in \mathbb{R}^{m \times n}$ with $\textnormal{rk}(\hat{X}) = n$ it holds
	\begin{equation*}
		\rho(X,y) = \sqrt{1 - \det\left(\left(\hat{X}|\hat{y}\right)^t \left(\hat{X}|\hat{y}\right)\right) / \left(\det\left((\hat{X})^t \hat{X}\right) \hat{y}^t \hat{y}\right)} 
	\end{equation*}
\end{cor}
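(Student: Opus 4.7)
The plan is to reduce the claim to a one-line application of Pythagoras combined with Theorem~\ref{thm_lossval}. The idea is to rewrite $\rho(X,y)^2$ as $1 - \delta^2/\|\hat{y}\|^2$, where $\delta$ is the loss value, and then substitute the determinant formula already proved.

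First I would simplify the numerator $\hat{y}^t \hat{p}$ in the definition of $\rho$. Because $\hat{p}$ is the orthogonal projection of $\hat{y}$ onto the column space of $\hat{X}$, the residual $\hat{y} - \hat{p}$ is orthogonal to that column space and in particular to $\hat{p}$. Splitting $\hat{y} = \hat{p} + (\hat{y} - \hat{p})$ therefore gives $\hat{y}^t \hat{p} = \hat{p}^t \hat{p} = \|\hat{p}\|^2$, so the definition collapses to $\rho(X,y) = \|\hat{p}\| / \|\hat{y}\|$.

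Next I would invoke Pythagoras for the same orthogonal decomposition: $\|\hat{y}\|^2 = \|\hat{p}\|^2 + \|\hat{y} - \hat{p}\|^2$. By Theorem~\ref{thm_lossval}, the second summand is the squared loss value $\delta^2 = \det((\hat{X}|\hat{y})^t (\hat{X}|\hat{y})) / \det(\hat{X}^t \hat{X})$; note that the rank hypothesis $\textnormal{rk}(\hat{X}) = n$ here is exactly the hypothesis required by that theorem. Dividing by $\|\hat{y}\|^2 = \hat{y}^t \hat{y} > 0$ (which is nonzero by the assumption $\hat{y} \ne 0$) yields
\begin{equation*}
	\rho(X,y)^2 \;=\; \frac{\|\hat{p}\|^2}{\|\hat{y}\|^2} \;=\; 1 - \frac{\delta^2}{\hat{y}^t \hat{y}} \;=\; 1 - \frac{\det((\hat{X}|\hat{y})^t (\hat{X}|\hat{y}))}{\det(\hat{X}^t \hat{X}) \, \hat{y}^t \hat{y}} .
\end{equation*}
Taking the square root gives the stated formula, where the correct sign choice is the positive one since $\rho(X,y) \ge 0$ was already established in the discussion preceding the corollary.

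There is no real obstacle here; the proof is essentially bookkeeping once one recognises that $\rho$ is exactly $\cos$ of the angle between $\hat{y}$ and its projection, and that Pythagoras links this cosine to the loss value. The only points that need a moment of care are verifying that the hypotheses of Theorem~\ref{thm_lossval} are subsumed by the hypotheses of the corollary, and justifying the positive sign of the square root by appealing to the non-negativity of $\rho$.
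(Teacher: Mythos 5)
Your proof is correct and follows the same route as the paper: the paper's own argument is exactly Pythagoras applied to $\hat{y}/\|\hat{y}\|$ as hypotenuse and $\hat{p}/\|\hat{y}\|$ as cathetus, combined with Theorem~\ref{thm_lossval} for the remaining cathetus (the loss value). You merely spell out the implicit steps (that $\hat{y}^t\hat{p}=\|\hat{p}\|^2$ so $\rho=\|\hat{p}\|/\|\hat{y}\|$, the equivalence of the rank hypotheses via $\textnormal{rk}(1|X)=\textnormal{rk}(\hat{X})+1$, and the sign of the square root), which the paper leaves to the reader.
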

\begin{proof}
	The assertion follows from Theorem \ref{thm_lossval} by the Theorem of Pythagoras applied to $\hat{y}/ \|\hat{y}\|$ as the hypotenuse and $\hat{p} / \|\hat{y}\|$ as a cathetus.
\end{proof}

\end{document}